\numberwithin{equation}{section}
\newtheorem{theorem}{Theorem}
\newtheorem{proposition}{Proposition}
\begin{document}
          \title{Painlev\'{e}-type  asymptotics  for  the defocusing  Hirota equation in transition  region}
\author{Weikang XUN$^1$, Luman JU$^2$ and Engui FAN$^{1}$\thanks{\ Corresponding author and email address: faneg@fudan.edu.cn } }
\footnotetext[1]{ \  School of Mathematical Sciences  and Key Laboratory of Mathematics for Nonlinear Science, Fudan University, Shanghai 200433, P.R. China.}

\date{ }

\maketitle

\begin{abstract}
	\baselineskip=17pt
      We consider   the Cauchy   problem for the classical Hirota equation on the line with decaying initial data. 
    Based on the spectral analysis of the Lax pair of the Hirota equation, 
 we  first expressed the solution of the Cauchy   problem in terms of the solution of a Riemann-Hilbert problem. 
 Further we apply nonlinear steepest descent analysis to  obtain the long-time  asymptotics   
 of the solution in the critical transition region $|\frac{x}{t} - \frac{\alpha^2}{3\beta}|t^{2/3}  \leq M$, $M$ is a positive constant. 
  Our  result shows  that the long time asymptotics of the Hirota equation can be expressed  in terms of the solution of   Painlev\'{e}  $\mathrm{II}$  equation.
\par\noindent\textbf{Keywords:} Hirota equation,    steepest descent method,  Painlev\'{e} $\mathrm{II}$  equation, long-time asymptotics. 
\par\noindent\textbf{Subject Classification: }  35Q51; 35Q15; 37K15; 35C20.
\end{abstract}

\baselineskip=17pt

\newpage

\tableofcontents

\section{Introduction}

    In 1973, Hirota first derived the following equation \cite{Hirota}
\begin{equation}\label{hirota}
      i u_{t} + \alpha u_{xx} + i \beta u_{xxx}  + 3 i \gamma  |u|^2  u_{x}  + \delta |u|^2 u =0,
\end{equation}
where $u(x,t)$ is a complex-valued scalar function, $\alpha,\beta,\gamma$ and $\delta$ are real constants which satisfy $\alpha \gamma = \beta \delta$.
Especially for  $\alpha =  \frac{1}{2} \delta$ and $\beta = \frac{1}{2} \gamma$,   the equation  (\ref{hirota})  reduces to the form
\begin{equation}\label{hirota2}
  iu_t+\alpha(u_{xx}-2|u|^2u)+i\beta(u_{xxx}-6|u|^2u_x)=0,
\end{equation}
 where real parameters $\alpha$ and $\beta$ stand for the second dispersion and the third dispersion, respectively.  The equation
  (\ref{hirota2})  is integrable  system  which   is   the combination  of  complex mKdV  equation  and   the NLS equation.

   On account of the remarkable properties and the important role played in the scientific research,  much work on  a series of theoretical and practical work on various problems of this equation  has been  done.     Hirota obtained the exact $N$-envelope solitons by applying the bilinear direct method.
The relation between discrete surfaces  with constant negative gaussian curvature and the Hirota equation was considered in \cite{Alexander}. 
 The rogue wave solution and rational solution of the Hirota equation were further  studied  \cite{AnkiewiczA,YanZ}. 
 Nevertheless, as for long-time asymptotic analysis, the nonlinear steepest descent method developed by Deift and Zhou
    has been proved to be one of the most effective method \cite{PDEIFT}.  Based on the nonlinear steepest descent method,  many meaningful asymptotic analysis results have been investigated
    \cite{RN9,RN10,Grunert2009,MonvelCH,xu2015,xusp,fNLS,Liu3,SandRNLS,YF2021,CF2022}. For example, Huang  et al.  analyzed  the high order asymptotics  for the Hirota equation via the Deift-Zhou high order theory \cite{Huang}.
Guo et al.  first considered the long time asymptotic  behaviour  of the solution for the Hirota equation on the half line \cite{Guo}.
The  asymptotic  analysis  on  the high-order solitons was  discussed by Ling   \cite{Ling}.
Boutet de Monvel et al. discussed the Painlev\'{e}-type asymptotics for the Camassa-Holm equation by nonlinear steepest descent method \cite{Monvel}.    Charlier and   Lenells have carefully  considered the Airy and Painlev\'{e} asymptotics for the mKdV equation in \cite{Charlier}.  Recently,   Huang and  Zhang complete the extension from the Painlev\'{e} asymptotics analysis for the mKdV equation to that of the mKdV hierarchy\cite{Huanglin}.

To our knowledge, the Painlev\'{e} asymptotics  of  the Hirota equation in transition region  
for  the  Hirota equation  are still not presented yet. So in present paper, we focus on the long-time asymptotic behavior  for the Hirota equation in 
tansition region $\{ (x,t) \in \mathbb{R}^2 \left|
 \right. |\frac{x}{t} -  \frac{\alpha^2}{3\beta}|t^{2/3}  \leq M \}$ by applying the improved nonlinear steepest descent method.
 
 The organization of this paper is as follows: In Section 1, we first recall the construction of the corresponding Riemann-Hilbert problem and introduce the Painlev\'{e} region $\mathcal{P}$ related to the Hirota equation. In Section 2, we focus on the long-time asymptotic  analysis for the Hirota equation in the sector $\mathcal{P}_{\leq}$. First, we make an analytic approximation  for the scattering data. Next, we do a series of contour deformation to convert the Riemann-Hilbert problem into the solvable model problem. Based on the  above operations, we can finally obtain the asymptotic  results of the solutions for the defocusing Hirota equation. In section 4, we investigate the asymptotic behavior of the Hirota equation in the sector $\mathcal{P}_{\geq}$ using the same way as the last section. Finally, we   find that the final asymptotic   result can be given explicitly in terms of  the real-valued solutions of the Painlev\'{e}-$\mathrm{II}$ equation.

\section{Riemann-Hilbert problem}

We consider  the  Cauchy problem for defocusing Hirota equation 
\begin{equation}\label{HrtEq}
       \begin{aligned}
                & iu_t+\alpha(u_{xx}-2|u|^2u)+i\beta(u_{xxx}-6|u|^2u_x)=0,\ t>0, x\in \mathbb{R}, \\
                & u(x,0)=u_0(x) \in \mathcal{S} \mathbb{(R)},
       \end{aligned}
\end{equation}
where $\alpha\in \mathbb{R}$,  $\beta \in \mathbb{R}^{+}$. The Lax pair corresponding to \eqref{HrtEq} is given by 
\begin{equation}\label{Lax}
       \phi_x=P\phi,  \qquad \phi_t=Q\phi,
\end{equation}
where
\begin{equation}
       \begin{aligned}
            &\phi=\left(\begin{array}{lr}\phi_1\\ \phi_2\end{array}\right),\quad P=-ik\sigma_3+U,\quad U= \begin{pmatrix} 0&u\\ \overline{u}&0 \end{pmatrix},\\
            &Q=-4i\beta\sigma_3 k^3-2i\alpha\sigma_3 k^2 +V,\quad V=V_2 k^2+V_1 k+V_0,\\
            &V_2=4\beta U,\quad V_1= \begin{pmatrix}-2i \beta |u|^2&2i\beta u_x+2\alpha u\\
                -2i \beta \overline{u}_x+2\alpha \overline{u}&2i\beta|u|^2   \end{pmatrix},\\
             &V_0=  \begin{pmatrix} -i\alpha |u|^2+\beta(-u\overline{u}_x+u_x\overline{u})&i\alpha u_x -\beta(u_{xx}-2|u|^2u)\\-i \alpha \overline{u}_x+\beta(-\overline{u}_{xx}+2|u|^2\overline{u})&i\alpha |u|^2-\beta(-u\overline{u}_x+u_x\overline{u})\end{pmatrix},
       \end{aligned}
\end{equation}
and $\overline{u}$ denotes the  complex conjugate of  $u$. Here $\sigma_3$ is one of the Pauli matrices defined by
\begin{equation}
          \begin{aligned}
                \sigma_1=\begin{pmatrix} 0&1\\1&0 \end{pmatrix},  \quad \sigma_2= \begin{pmatrix} 0&i\\-i&0 \end{pmatrix},
                \quad \sigma_3=\begin{pmatrix}   1&0\\0&-1  \end{pmatrix}.
           \end{aligned}
\end{equation}
Considering the asymptotic property of  $\phi_{\pm}$, we make the transformation 
 $$\phi=\Phi e^{[-ikx-(4i\beta k^3+2i\alpha k^2)t]\sigma_3},$$
  we then  obtain  the equivalent Lax representation  
\begin{equation}\label{lax1}
           \begin{aligned}
             & \Phi_x+ik[\sigma_3,\Phi]=U\Phi,\\
             & \Phi_t+(2i\alpha k^2+4i\beta k^3)[\sigma_3,\Phi]=V\Phi.
           \end{aligned}
\end{equation}
We first define two solutions of the spectral problem  
\begin{eqnarray}\label{laxpair2}
                &&\Phi_-(x,t,k)=\mathrm{I}+\int_{-\infty}^x e^{-ik(x-y)\hat{\sigma}_3}U(y,t)\Phi_-(y,t,k)dy,\\
                &&\Phi_+(x,t,k)=\mathrm{I}-\int^{\infty}_x e^{-ik(x-y)\hat{\sigma}_3}U(y,t)\Phi_+(y,t,k)dy.
\end{eqnarray}
There exists a continuous matrix function $S(k)$ satisfying
\begin{equation}\label{jp1}
             \Phi_+(x,t,k)=\Phi_-(x,t,k)e^{-it\theta(k)\hat{\sigma}_3}S(k),
\end{equation}
where
\begin{equation}
      S(z) =  \begin{pmatrix}  s_{11}(k) &   s_{12}(k)  \\    s_{21}(k)     &   s_{22}(k)      \end{pmatrix},  \quad \theta(k)=4\beta k^3+2\alpha k^2+k\xi,\quad \xi=\frac{x}{t}.
\end{equation}
Moreover,    we have the critical  symmetry  $ S(k)=\sigma_1\overline{S(\overline{k})}\sigma_1$,  which implies that 
 matrix function   $S(k)$  can be written as the following form
\begin{equation}
         S(k) =  \begin{pmatrix}  a(k)   &  \overline{b(\overline{k})} \\   b(k)      &  \overline{ a (\overline{k})}      \end{pmatrix}.
\end{equation}
Based on the Abel formula, we obtain that   $ \det S(k) =1$.  Moreover, we have   $|a(k)|^2 = 1 + |b(k)|^2 $, $k \in \mathbb{R}$. For technical reasons, we assume here that  $a(k)$ has no singularity on the real axis.  Next, we define
\begin{equation}
              m(x,t,k)=\left\{\begin{array}{lr}\left(\dfrac{[\Phi_-]_1}{a(k)},[\Phi_+]_2\right),\quad \mathrm{Im}(k)>0,\\
             \left([\Phi_+]_1,\dfrac{[\Phi_-]_2}{\overline{a(\overline{k})}}\right),\quad \mathrm{Im}(k)<0.\end{array}\right.
\end{equation}
According to  the   relation \eqref{jp1}, we obtain  that $m(x,t,k)$ satisfies the following jump condition
\begin{equation}
           m_+(x,t,k)=m_-(x,t,k)J(x,t,k),  \quad k \in \mathbb{R},
\end{equation}
where
\begin{equation}\label{jpK}
              J(x,t,k)=\begin{pmatrix}  1-|r(k)|^2&-\overline{r(k)}e^{-2it\theta(k) }\\r(k)e^{2it\theta(k)}&1\end{pmatrix}, \quad r(k)=\frac{  \overline{b(\overline{k})} }{ a(k)}.
\end{equation}
The potential  $u(x,t)$   is given by
\begin{equation}\label{recstr}
           u(x,t)=2i\lim_{k\to\infty}(km)_{12},
\end{equation}
where $m$ is a $2 \times 2$ matrix-value function satisfying following Riemann-Hilbert problem:
\begin{theorem}
        Given $r(k)$, the function $m(x,t,k)$ satisfies the matrix Riemann-Hilbert problem as follows:\\
               $\bullet$ Analyticity: $m(x,t,k)$ is analytic in $k\in\mathbb{C}\setminus\mathbb{R}$.\\
               $\bullet$ Jump condition: $ m_+(x,t,k)=m_-(x,t,k)J(x,t,k),\ k\in \mathbb{R}.$  \\
               $\bullet$ Asymptotic property:  $ m(x,t,k)\to\mathrm{I},\ as\ k \to \infty.$
\end{theorem}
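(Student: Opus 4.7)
The plan is to verify the three defining properties of the Riemann-Hilbert problem directly from the definition of $m(x,t,k)$ in terms of the eigenfunctions $\Phi_\pm$ and the scattering coefficients.

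First, for analyticity, I would start from the Volterra integral equations \eqref{laxpair2}. Standard Neumann-series estimates, combined with the Schwartz-class assumption $u_0 \in \mathcal{S}(\mathbb{R})$ and the time-evolution structure, show that the columns $[\Phi_-]_1$ and $[\Phi_+]_2$ admit analytic extensions to $\mathrm{Im}(k)>0$, while $[\Phi_-]_2$ and $[\Phi_+]_1$ extend analytically to $\mathrm{Im}(k)<0$. Since $a(k) = \det([\Phi_-]_1,[\Phi_+]_2)$ is built from two vectors analytic in the upper half plane, $a(k)$ is analytic there; under the assumption that $a(k)$ has no zeros on $\mathbb{R}$ (and, by the symmetry and the Schwartz assumption, generically none in the interior either), $1/a(k)$ is holomorphic in the upper half plane, so the upper row of $m$ is analytic there. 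The same reasoning, using $\overline{a(\bar k)}$, handles the lower half plane, yielding analyticity on $\mathbb{C}\setminus\mathbb{R}$.

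Second, for the jump relation on $\mathbb{R}$, I would rewrite the scattering relation \eqref{jp1} column by column. Reading $\Phi_+ = \Phi_-\, e^{-it\theta\hat\sigma_3} S(k)$ componentwise gives
\begin{equation*}
[\Phi_+]_1 = a(k)[\Phi_-]_1 e^{0} + b(k)e^{2it\theta}[\Phi_-]_2, \qquad [\Phi_+]_2 = \overline{b(\bar k)}e^{-2it\theta}[\Phi_-]_1 + \overline{a(\bar k)}[\Phi_-]_2,
\end{equation*}
after pulling the diagonal exponential through. Dividing by $a(k)$ or $\overline{a(\bar k)}$ where appropriate and comparing $m_+$ (built from the upper-half-plane columns extended to $\mathbb{R}$ from above) with $m_-$ (built from the lower-half-plane columns extended from below), the jump matrix collapses to exactly the matrix in \eqref{jpK}, with $r(k)=\overline{b(\bar k)}/a(k)$ and the diagonal entry $1-|r(k)|^2$ produced by the identity $a(k)\overline{a(\bar k)} - b(k)\overline{b(\bar k)} = 1$ coming from $\det S(k)=1$.

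Third, for the normalization at infinity, iterating the Volterra equations \eqref{laxpair2} and integrating by parts shows $\Phi_\pm(x,t,k) = I + O(1/k)$ as $k\to\infty$ in the appropriate half-planes, with the $O(1/k)$ coefficient expressible as an integral of $U$. Since $a(k) = 1 + O(1/k)$ and $\overline{a(\bar k)} = 1 + O(1/k)$ as $k\to\infty$ (from the same expansion applied to $\det([\Phi_-]_1,[\Phi_+]_2)$), both definitions of $m$ in the two half-planes tend to the identity, giving $m(x,t,k)\to I$.

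The main technical obstacle, and the only non-routine step, is the justification that $1/a(k)$ is holomorphic in the upper half plane, i.e.\ that $a(k)$ has no zeros there. In the defocusing setting this follows from the sign of $|a(k)|^2 = 1 + |b(k)|^2 \ge 1$ on $\mathbb{R}$ together with the argument-principle/symmetry analysis, and is the point where the defocusing assumption (absence of solitons and of discrete spectrum) enters essentially; everything else is a bookkeeping verification of the Volterra construction and of the algebraic scattering relation \eqref{jp1}.
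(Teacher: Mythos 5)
Your proposal is correct and is exactly the standard derivation that the paper relies on but never writes out (Theorem~1 is stated without proof, the construction being imported from the cited Riemann--Hilbert literature for the Hirota equation): analyticity of the appropriate columns of $\Phi_\pm$ from the Volterra equations, the jump from the scattering relation \eqref{jp1} together with $\det S(k)=1$, absence of zeros of $a$ from $|a|^2=1+|b|^2$ in the defocusing setting, and the normalization from the large-$k$ Neumann expansion. The one step you should actually carry out rather than assert is the final bookkeeping in the jump computation: with the paper's literal conventions (\eqref{jp1} written as $\Phi_+=\Phi_-e^{-it\theta\hat{\sigma}_3}S$, $m_+$ built from $[\Phi_-]_1/a$ and $[\Phi_+]_2$, and $r=\overline{b(\overline{k})}/a$) the product $m_-^{-1}m_+$ does \emph{not} reduce to \eqref{jpK} --- one finds diagonal entries $1/a^2$ and $\overline{a}/a$ instead of $1-|r|^2$ and $1$ --- and the clean factorization only emerges if the scattering relation is read in the opposite direction (equivalently, with $r=b/a$); this is a convention inconsistency in the paper that your verification would expose rather than a gap in your method, but as written the claim that the algebra ``collapses to exactly the matrix in \eqref{jpK}'' is not literally true under the stated conventions.
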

\noindent  The signature table of $\mathrm{Re}(i\theta)$ is shown in Figure 1, and stationary points as are given by
\begin{equation}\label{stationary}
k_1=\frac{-\alpha-\sqrt{\alpha^2-3\beta \xi}}{6\beta},\quad k_2=\frac{-\alpha+\sqrt{\alpha^2-3\beta \xi}}{6\beta}.
\end{equation}
\begin{figure}[H]
      \centering
                % Requires \usepackage{graphicx}
          \includegraphics[width=6.5cm]{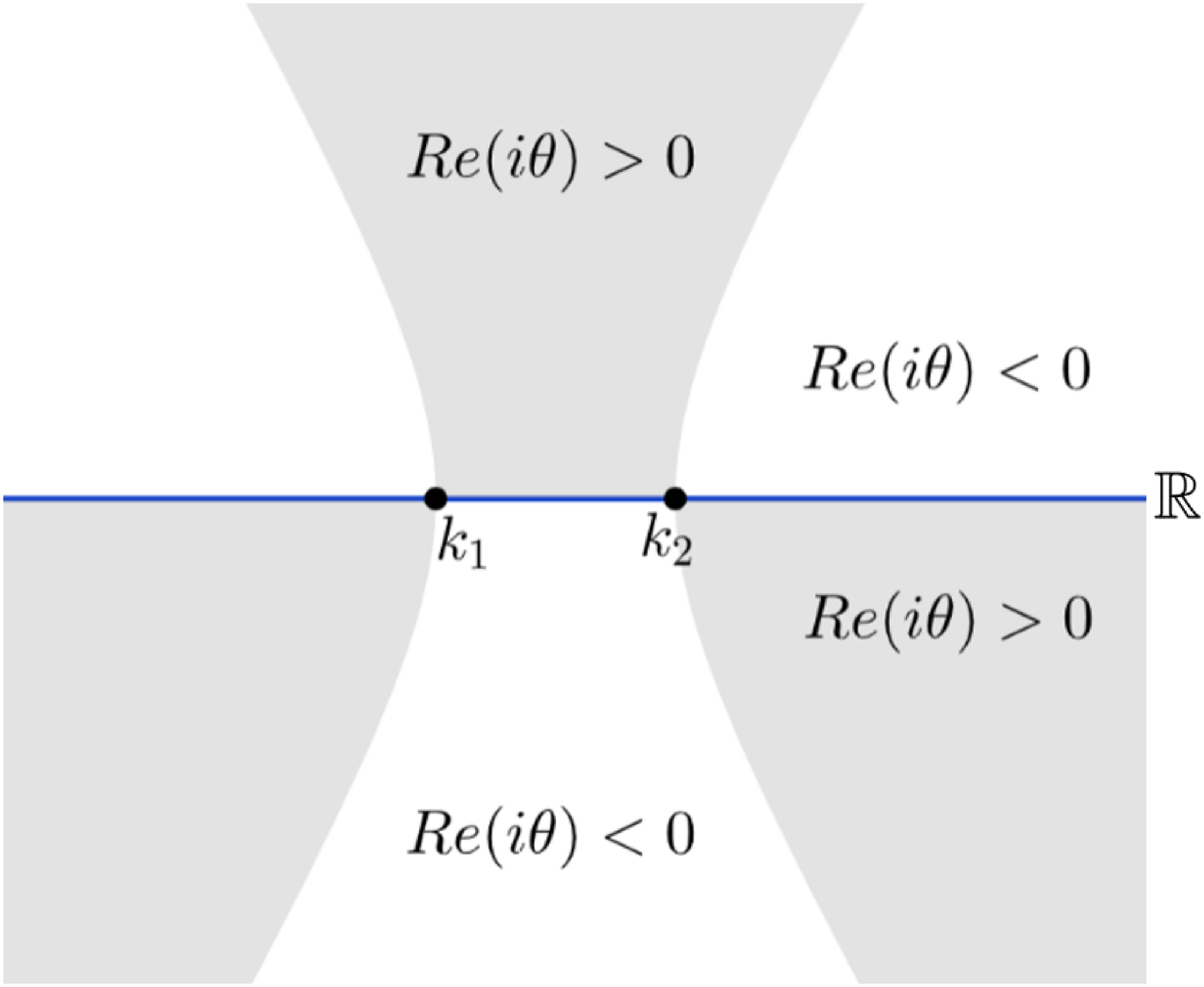}\qquad
          \includegraphics[width=6.5cm]{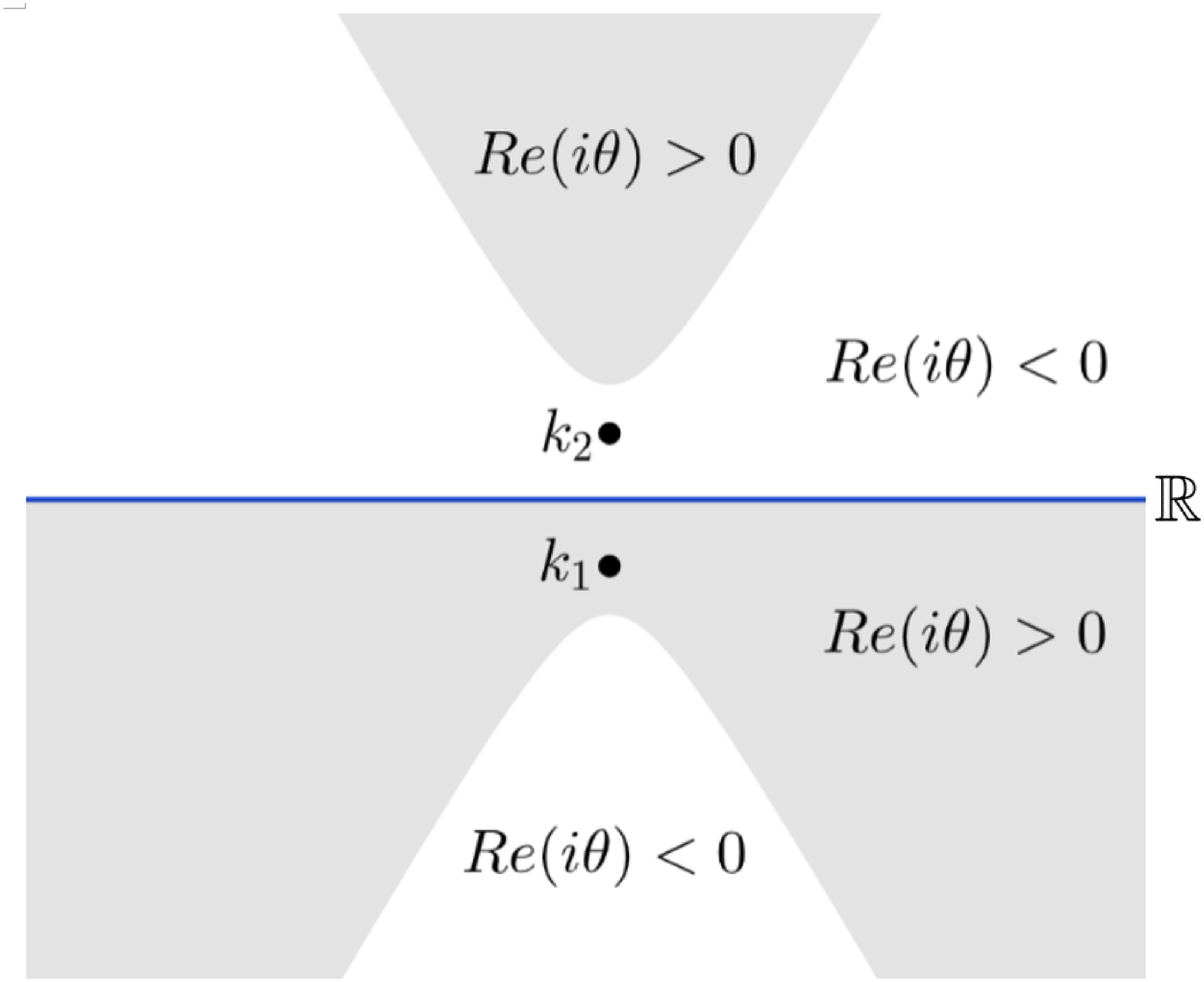}\\
          \caption{ \footnotesize (colour online). The critical points $k_1, k_2$ in the complex $k$-plane in the case of $\xi<\frac{\alpha^2}{3\beta}$ (left) and $\xi>\frac{\alpha^2}{3\beta}$ (right). The regions where $\mathrm{Re}(i\theta)> 0$ and $\mathrm{Re}(i\theta)< 0$ are shaded and white, respectively.}\label{fg1}
\end{figure}
\noindent Based on the signature table, the jump matrix $J(x,t,k)$ has following triangular factorization
\begin{equation}
        J(x,t,k)=\left(\begin{array}{cc}1-|r(k)|^2&-\overline{r(k)}e^{-2it\theta(k) }\\r(k)e^{2it\theta(k)}&1\end{array}\right)=
           \left(\begin{array}{cc}1&-\overline{r(k)}e^{-2it\theta(k) }\\0&1\end{array}\right)\left(\begin{array}{cc}1&0 \\r(k)e^{2it\theta(k)}&1\end{array}\right).\nonumber
\end{equation}
We aim to find the asymptotics of $u(x,t)$ in the transition region defined as
\begin{equation}
           \mathcal{P}:=\left\{  (x,t)\in \mathbb{R}^2, ~  0< |\xi-\frac{\alpha^2}{3\beta} |t^{2/3}< M      \right\},
\end{equation}
where $M  > 0$ is a constant.  We use  the  following notations
\begin{equation}
              \mathcal{P}_{\leq}:=\mathcal{P}\cap\left\{\xi\leq\frac{\alpha^2}{3\beta}\right\},\quad
              \mathcal{P}_{\geq}:=\mathcal{P}\cap\left\{\xi\geq\frac{\alpha^2}{3\beta}\right\}
\end{equation}
to denote the left and right halves of $\mathcal{P}$, respectively.

\section{Asymptotics in Sector $\mathcal{P}_{\leq}$}

Suppose $(x,t)\in \mathcal{P}_{\leq}$. In this region, the two stationary points $k_1$, $k_2$ defined by \eqref{stationary} 
are real and close to $-\frac{\alpha}{6\beta}$ at least as the speed of $t^{-\frac{1}{3}}$ as $t \to \infty$.
\subsection{Analytical  approximation}

Let $\Gamma \in \mathbb{C}$ denote the contour $\Gamma=(\cup_{j=1}^4 l_j)\cup\mathbb{R}$ oriented to the right as in Figure 2, where
\begin{equation}
\begin{aligned}
l_1&:=\{k_1+le^{\frac{5\pi i}{6}},\ l>0\},   &  l_2:=\{k_2+le^{\frac{\pi i}{6}},\ l>0\},\\
l_3&:=\{k_1+le^{-\frac{5\pi i}{6}},\ l>0\},   & l_4:=\{k_2+le^{-\frac{\pi i}{6}},\ l>0\},
\end{aligned}
\end{equation}
and let 
\begin{align}
&D =\{\arg (k-k_1) \in(5 \pi / 6, \pi)\} \cup\{\arg (k-k_2) \in(0, \pi / 6)\},\nonumber\\
&D^*=\{\arg (k-k_1) \in(-\pi,-5 \pi / 6)\} \cup\{\arg (k-k_2) \in(-\pi / 6,0)\} \nonumber
\end{align}
 denote the open subsets shown in the same figure.
\begin{figure}[H]
  \centering
  % Requires \usepackage{graphicx}
  \includegraphics[width=7cm]{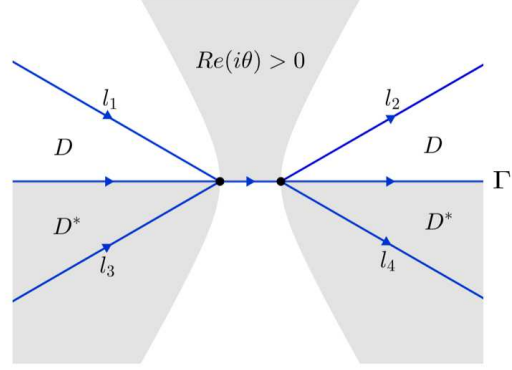}\\
            \caption{ \footnotesize (colour online). The contour $\Gamma$ and the sets $D$ and $D^*$ in the case of $\mathcal{P}_{\leq}$. The region where $\mathrm{Re}(i\theta)>0$ is shaded.}
\end{figure}
\noindent
 Then we have an analytical  approximation   for $r(k)$ as follows:
\begin{proposition}
There exists a decomposition
\begin{equation}
r(k)=r_a(x,t,k)+r_r(x,t,k),\quad k \in (-\infty,k_1)\cup(k_2,\infty),
\end{equation}
where $r_a$ and $r_r$ satisfy the following properties:\\
(i)\ For $(x,t)\in \mathcal{P}_{\leq}$, $r_a(x,t,k)$ is defined and continuous for $k\in \overline{D}$ and analytic for $k\in D$.\\
(ii)\ The function $r_a(x,t,k)$ satisfies
\begin{equation}
|r_a(x,t,k)|\leq\frac{C}{1+|k|^2}e^{\frac{t}{4}|\text{Re}(2i\theta(k))|},\ k \in \overline{D},
\end{equation}
and
\begin{equation}
|r_a(x,t,k)-r(k_j)|\leq C|k-k_j|e^{\frac{t}{4}|\text{Re}(2i\theta(k))|},\ k\in \overline{D},\ j=1,2.
\end{equation}
(iii)\ The $L^1$, $L^2$ and $L^{\infty}$ norms of the function $r_r(x,t,\cdot)$ on $\mathbb{R}\setminus (k_1,k_2) $ are $O(t^{-\frac{3}{2}})$ as $t\to\infty$ uniformly for $(x,t)\in \mathcal{P}_{\leq}$.
\end{proposition}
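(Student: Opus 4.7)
My plan is to adapt the Deift--Zhou analytic approximation scheme, in the form refined by Charlier and Lenells for the transition region of mKdV, to the present Hirota setting. Since $u_0\in\mathcal{S}(\mathbb{R})$ and $a(k)$ has no real zeros, the reflection coefficient $r(k)=\overline{b(\overline{k})}/a(k)$ lies in $\mathcal{S}(\mathbb{R})$, so Taylor expansions of arbitrary order and Fourier techniques are at our disposal. The decomposition $r=r_a+r_r$ will be produced in two stages: first a rational interpolant capturing the Taylor data of $r$ at the two stationary points and globally analytic in $\overline{D}$, then a Fourier-based analytic extension of the smooth residual.

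In detail, fix a large integer $N$ (to be chosen later to meet the $O(t^{-3/2})$ target). Define
\begin{equation*}
r_a^{(0)}(k) := \sum_{j=1}^{2}\sum_{n=0}^{N-1} \frac{c_{j,n}}{(k-k_j+i)^{n+1}},
\end{equation*}
with constants $c_{j,n}(x,t)$ uniquely determined by matching the Taylor expansion of $r$ at $k_1,k_2$ up to order $N-1$, one free coefficient being used to impose the decay $r_a^{(0)}(k)=O(|k|^{-2})$ at $|k|\to\infty$. All poles of $r_a^{(0)}$ lie on $\{\mathrm{Im}\,k=-1\}$, so $r_a^{(0)}$ is holomorphic in a neighborhood of $\overline{D}$ and satisfies the bound in (ii) by direct inspection; the matching at $k_j$ gives the first-order estimate (ii) automatically. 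The residual $\tilde r(k):=r(k)-r_a^{(0)}(k)$ is Schwartz on $\mathbb{R}$ and vanishes to order $N$ at each $k_j$; I then split $\tilde r=\tilde r_a+\tilde r_r$ by a Fourier-type decomposition aligned with the rays bounding $D$, taking $\tilde r_a$ to be the component whose Fourier support yields analyticity into $D$ (a Paley--Wiener-type extension). Setting $r_a:=r_a^{(0)}+\tilde r_a$ and $r_r:=\tilde r_r$ yields the desired decomposition. The full bound (ii) on $r_a$ follows by combining the explicit estimate on $r_a^{(0)}$ with the Paley--Wiener estimate on $\tilde r_a$; the exponential weight $e^{\frac{t}{4}|\mathrm{Re}(2i\theta(k))|}$ is introduced to absorb the growth of the analytic continuation of $\tilde r_a$ as $k$ moves deep inside $D$, and it is harmless once $r_a$ is multiplied by the oppositely-behaved factor $e^{2it\theta}$ in the jump matrix. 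The $L^p$ bounds on $r_r$ in (iii) come from repeated integration by parts in the Fourier integral defining $\tilde r_r$, the vanishing of $\tilde r$ to order $N$ at $k_j$ killing the boundary terms.

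The main obstacle is the degeneration of the phase in $\mathcal{P}_{\leq}$: the two stationary points $k_1,k_2$ coalesce at rate $O(t^{-1/3})$ toward $-\alpha/(6\beta)$, so $\theta''(k_j)=\pm 4\sqrt{\alpha^2-3\beta\xi}=O(t^{-1/3})$ and the quadratic stationary-phase bounds that suffice in the generic long-time regime lose all uniformity. I plan to handle this by keeping the auxiliary distance (the constant $1$ placing the rational poles off the real axis) fixed, so that $r_a^{(0)}$ remains uniformly analytic in $\overline{D}$ through the coalescence, and by using the cubic coefficient $\theta'''(k)\equiv 24\beta$, which is uniformly nonzero, in place of $\theta''(k_j)$ in the integration-by-parts estimates controlling $\tilde r_r$. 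This Airy/Painlev\'{e} $\mathrm{II}$ scaling $(k-k_j)\sim t^{-1/3}$ is precisely what produces the rate $O(t^{-3/2})$ in (iii); it is also the scaling that ultimately forces the Painlev\'{e} $\mathrm{II}$ profile in the main asymptotic result.
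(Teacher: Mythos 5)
Your proposal is correct and follows essentially the same route as the paper, which gives no independent argument but simply invokes Lemma 4.8 of Lenells' low-regularity steepest descent paper (the rational Hermite interpolant at the stationary points plus a Fourier/Paley--Wiener splitting of the Schwartz residual, with integration by parts giving the $O(t^{-3/2})$ remainder). The only point to make explicit is the one you already gesture at: uniformity over $\mathcal{P}_{\leq}$ as $k_1,k_2$ coalesce requires checking that the confluent interpolation coefficients $c_{j,n}$ stay bounded (they do, being divided differences of the smooth function $r$) and that the phase-variable change underlying the Fourier step is controlled by the nonvanishing cubic coefficient $\theta'''\equiv 24\beta$ rather than by $\theta''(k_j)$.
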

\begin{proof}See \cite{Lns2017}, Lemma 4.8 for more details.\end{proof}
\subsection{Contour deformation}
 In what follows, we  perform the contour deformation as follows:
\begin{equation}\label{chg1}
m^{(1)}(x,t,k)=m(x,t,k)\times\left\{\begin{aligned}
&\begin{pmatrix}1&0\\-r_a(k)e^{2it\theta}&1\end{pmatrix}, &&k \in D,\\
&\begin{pmatrix}1&-\overline{r_a(\overline{k})}e^{-2it\theta}\\0&1\end{pmatrix}, &&k \in D^*,\\
&\ \mathrm{I}, &&\text{elsewhere}.\end{aligned}\right.
\end{equation}
Thus, we find $m^{(1)}(x,t,k)$ satisfies the new RH problem
\begin{equation}\label{jmp1}
m^{(1)}_+(x,t,k)=m^{(1)}_-(x,t,k)J^{(1)}(x,t,k),
\end{equation}
where
\begin{equation}
J^{(1)}(x,t,k)=\left\{\begin{aligned}&\left(\begin{array}{cc}
1&0\\r_a(k)e^{2it\theta}&1\end{array}\right), &&k\in l_1\cup l_2,\\
&\left(\begin{array}{cc}
1&-\overline{r_a(\overline{k})}e^{-2it\theta}\\0&1\end{array}\right), &&k\in l_3\cup l_4,\\
&\left(\begin{array}{cc}1-|r(k)|^2&-\overline{r(k)}e^{-2it\theta(k) }\\r(k)e^{2it\theta(k)}&1\end{array}\right), &&k \in (k_1,k_2),\\
&\left(\begin{array}{cc}1-|r_r(k)|^2&-\overline{r_r(k)}e^{-2it\theta(k) }\\r_r(k)e^{2it\theta(k)}&1\end{array}\right), &&k \in \mathrm{R}/(k_1,k_2).
\end{aligned}\right.
\end{equation}
\subsection{Local  model}
Considering $ \mid \xi-\frac{\alpha^2}{3\beta}\mid t^{\frac{2}{3}}\leq C,$
it is obvious that for $t\to +\infty$, $k_j\to -\frac{\alpha}{6\beta}$. The phase function can be approximated as
$t\theta(k)=t\theta(-\frac{\alpha}{6\beta})+s\hat{k}+ \frac{4}{3}\hat{k}^3,$
where
\begin{equation}\label{sk}
s=(3\beta)^{-1/3}(\xi-\frac{\alpha^2}{3\beta})t^{2/3}, \quad \hat{k}=(3\beta t)^{1/3}(k+\frac{\alpha}{6\beta}).
\end{equation}
The coefficients in above formula have been fittingly chosen such that the form of the scaled phase function is the same as that of the RH problem for the Painlev\'{e} $\amalg$ equation. For a fixed $\varepsilon>0$, let $D_{\varepsilon}(-\frac{\alpha}{6 \beta})=\{k \in \mathbb{C}|| k+\frac{\alpha}{6 \beta} \mid<\varepsilon\}$, and let $\Gamma^{\varepsilon}=(\Gamma \cap D_{\varepsilon}(-\frac{\alpha}{ 6 \beta})) \backslash((-\infty, k_{1}) \cup (k_{2},\infty))$. Next, we define
\begin{equation}
m^{(2)}(s, t, \hat{k})=m^{(1)}(x, t, k) e^{-i t\theta\left(-\frac{\alpha}{6 \beta}\right) \sigma_{3}},\quad k\in D_\varepsilon(-\frac{\alpha}{6\beta})\setminus\Gamma.
\end{equation}
Then the new jump matrix $J^{(2)}$ can be approximated as follows:
\begin{equation}
\begin{aligned}
        J^{(2)}(s, t, \hat{k})
            \to  \left\{\begin{array}{lr}
             \left(\begin{array}{cc}
                       1&0\\r\left(-\frac{\alpha}{6 \beta}\right) e^{2 i\left(s \hat{k}+\frac{4}{3} \hat{k}\right)} & 1\end{array} \right),\quad\quad\ \ k \in (\Gamma^{\varepsilon})_1,\\
             \left(\begin{array}{cc}
                   1 & -\overline{r\left(-\frac{\alpha}{6 \beta}\right)}e^{-2 i\left(s \hat{k}+\frac{4}{3} \hat{k}^{3}\right)}\\
                    0 & 1\end{array}\right),\quad k \in (\Gamma^{\varepsilon})_{2}, \\
            \left(\begin{array}{cc}
                    1 & -\overline{r\left(-\frac{\alpha}{6 \beta}\right)}e^{-2 i\left(s \hat{k}+\frac{4}{3} \hat{k}^{3}\right)}\\
                    0 & 1\end{array}\right)\left(\begin{array}{cc}
                    1&0\\r\left(-\frac{\alpha}{6 \beta}\right) e^{2 i\left(s \hat{k}+\frac{4}{3} \hat{k}\right)} & 1\end{array} \right),\quad k \in (\Gamma^{\varepsilon})_3, \end{array}\right.
\end{aligned}
\end{equation}
which is consistent with the jump matrix $\hat{J}$ defined by the model RH problem in terms of the solution of the Painlev$\'{e}$ $\mathrm{II}$ equation with $ s = i \left| r  (-\frac{\alpha}{6\beta} ) \right|$, $\left| r  (-\frac{\alpha}{6\beta} ) \right| <1$ in Appendix A.2. of \cite{Charlier}. We write $\Gamma^{\varepsilon}=\mathop{\bigcup}\limits_{j=1}^{3}\Gamma_j^{\varepsilon}$, where $\Gamma_j^{\varepsilon}$ denotes the part of $\Gamma^{\varepsilon}$ that maps into $j$, see Figure 3.
\begin{figure}[H]
  \centering
  % Requires \usepackage{graphicx}
  \includegraphics[width=6cm]{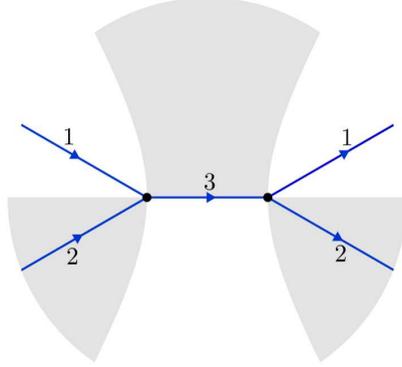}\\
 \caption{ \footnotesize (colour online). The contour $\Gamma^{\varepsilon}=\mathop{\bigcup}\limits_{j=1}^{3}\Gamma_j^{\varepsilon}$.}
\end{figure}
\noindent Thus we expect that $m^{(1)}(x,t,k)$ in $D_\varepsilon(-\frac{\alpha}{6\beta})$ approaches the solution $m^r(x,t,k)$ defined by $
m^{r}(x,t,k)=\mathrm{e}^{-\mathrm{i} t \theta\left(-\frac{\alpha}{6\beta}\right) \hat{\sigma}_{3}} \hat{m}(\rho, s, \hat{k})
$
as $t \to \infty$, where $\hat{m}(\rho, s, \hat{k})$ is  a  model RH problem in terms of the solution of the Painlev\'{e} $\mathrm{II}$ equation in Appendix A.2. of \cite{Charlier}.

\begin{proposition}For each $(x,t) \in \mathcal{P}_{\leq}$, $m^{r}(x, t, k)$ is an analytic function of $k \in D_{\varepsilon}(-\frac{\alpha}{6\beta}) \backslash \Gamma^{\varepsilon}$ such that $\left|m^{r}(x, t, k)\right| \leq C$. Across $\Gamma^{\varepsilon}$,  $m^{r}(x, t, k)$ has the jump condition $m_{+}^{r}=m_{-}^{r} J^{r}$, where the jump matrix $J^r$ satisfies
\begin{equation}
\|J^{(1)}-J^{r}\|_{L^{1} \cap L^{2} \cap L^{\infty}\left(\Gamma^{\varepsilon}\right)} \leq C t^{-\frac{1}{3}}.
\end{equation}
Furthermore, as $t \to \infty$,
$$
\left\|\left(m^{r}\right)^{-1}(x, t, k)-I\right\|_{L^{\infty}\left(\partial D_{\varepsilon}(-\frac{\alpha}{6 \beta})\right)}=O\left(t^{-\frac{1}{3}}\right),
$$
and
\begin{equation}\label{314}
\frac{1}{2 \pi i} \int_{\partial D_{\varepsilon}(-\frac{\alpha}{6 \beta})}\left(\left(m^{r}\right)^{-1}(x, t, k)-I\right) d k=-\frac{m_{1}^{r}(s)}{(3 \beta t)^{1 / 3}}+O\left(t^{-\frac{2}{3}}\right),
\end{equation}
where
\begin{equation}\label{m1}
m_{1}^{r}(s)=\left(\begin{array}{cc}
\frac{i}{2} \int^{s} y^{2}(\zeta) d \zeta &  \frac{i}{2} e^{-2 i t \theta\left(-\frac{\alpha}{6 \beta}\right)-i \gamma}  y(s) \\
-\frac{i}{2} e^{2 i t \theta\left(-\frac{\alpha}{6 \beta}\right)+ i \gamma}   y(s) & -\frac{i}{2} \int^{s} y^{2}(\zeta) d \zeta
\end{array}\right), \quad \gamma=\arg r\left(-\frac{\alpha}{6 \beta}\right),
\end{equation}
where $y(s)$ is a real-valued solution to Painlev\'{e} $\mathrm{II}$ equation
\begin{equation}
       y_{ss}(s) = s  y(s)  +  2  y(s)^3,
\end{equation}
which is fixed by its asymptotics as $s \to +\infty$,
\begin{equation}
y(s) \sim - \frac{1}{2 \sqrt{\pi}} \left|r\left(-\frac{\alpha}{6 \beta}\right)\right|  s^{-1/4} \exp(-\frac{2}{3} s^{3/2}).
\end{equation}
\end{proposition}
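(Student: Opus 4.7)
The plan is to compare the localized Riemann-Hilbert problem for $m^{(1)}$ near $k=-\frac{\alpha}{6\beta}$ with the known model problem $\hat m(\rho,s,\hat k)$ for Painlev\'e $\mathrm{II}$ from Appendix A.2 of \cite{Charlier}, transferring all estimates through the change of variables $\hat k=(3\beta t)^{1/3}(k+\frac{\alpha}{6\beta})$. Since the model problem is well-posed and its solution bounded uniformly in $s$ on compact sets (and in particular on the bounded range of $s$ dictated by $(x,t)\in\mathcal{P}_{\leq}$), the bound $|m^r(x,t,k)|\leq C$ follows from the definition $m^r=e^{-it\theta(-\frac{\alpha}{6\beta})\hat\sigma_3}\hat m(\rho,s,\hat k)$ together with the fact that the conjugation by $e^{-it\theta(-\frac{\alpha}{6\beta})\hat\sigma_3}$ does not affect the diagonal entries and only multiplies the off-diagonal ones by unimodular factors.

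For the norm estimate $\|J^{(1)}-J^r\|_{L^1\cap L^2\cap L^\infty(\Gamma^\varepsilon)}\leq Ct^{-1/3}$, I would split the difference into two sources: (a) the replacement of $r_a(k)$ by the constant value $r(-\frac{\alpha}{6\beta})$, and (b) the replacement of the full phase $t\theta(k)$ by the truncated cubic $t\theta(-\frac{\alpha}{6\beta})+s\hat k+\frac{4}{3}\hat k^3$. For (a) I would invoke property (ii) of Proposition 1, writing $|r_a(x,t,k)-r(k_j)|\leq C|k-k_j|e^{\frac{t}{4}|\mathrm{Re}(2i\theta)|}$ and then using $|k_j+\frac{\alpha}{6\beta}|=O(t^{-1/3})$ together with the local estimate $|k+\frac{\alpha}{6\beta}|\leq\varepsilon$ on $\Gamma^\varepsilon$, so the $k$-difference contributes $O(t^{-1/3})$ uniformly. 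For (b), a Taylor expansion gives a remainder of order $\hat k^4/(3\beta t)^{1/3}$ in the exponent, and since the linear and cubic terms combine to decay exponentially in $|\hat k|$ along the contour (by construction of the rays $l_j$), the resulting multiplicative factor $e^{O(\hat k^4/t^{1/3})}-1$ is $O(t^{-1/3})$ after integration against the exponential weight. Summing (a) and (b), and using the change of variable $dk=(3\beta t)^{-1/3}d\hat k$ to see that gains in $L^1$ exactly counterbalance losses in $L^\infty$, yields the three norm bounds in one go.

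The estimate on $\partial D_\varepsilon(-\frac{\alpha}{6\beta})$ then follows by a small-norm argument: the discrepancy between $m^{(1)}$ and $m^r$ satisfies a RH problem with jump $I+O(t^{-1/3})$, so $(m^r)^{-1}-I=O(t^{-1/3})$ on the fixed circle. Finally, for the integral identity \eqref{314} I would use the large-$\hat k$ expansion of the Painlev\'e $\mathrm{II}$ model solution,
\begin{equation}
\hat m(\rho,s,\hat k)=I+\frac{\hat m_1(s)}{\hat k}+O(\hat k^{-2}),\qquad \hat k\to\infty,
\end{equation}
where the coefficient $\hat m_1(s)$ is precisely the one recorded in the Painlev\'e $\mathrm{II}$ theory (diagonal entries $\pm\frac{i}{2}\int^s y^2(\zeta)\,d\zeta$ and off-diagonal entries proportional to $y(s)$). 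Conjugating by $e^{-it\theta(-\frac{\alpha}{6\beta})\hat\sigma_3}$ inserts the exponentials and the phase $\gamma=\arg r(-\frac{\alpha}{6\beta})$ into the off-diagonal entries, giving the matrix $m_1^r(s)$ in \eqref{m1}. Since $\hat k=(3\beta t)^{1/3}(k+\frac{\alpha}{6\beta})$, residue-like computation of $\frac{1}{2\pi i}\oint(m^r)^{-1}\,dk$ produces the factor $(3\beta t)^{-1/3}$ in front of $-m_1^r(s)$, with the $O(\hat k^{-2})$ tail giving the $O(t^{-2/3})$ remainder.

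The main obstacle, in my view, is step (b) of the jump estimate: making the control of the phase remainder $t\theta(k)-t\theta(-\frac{\alpha}{6\beta})-s\hat k-\frac{4}{3}\hat k^3$ genuinely uniform as $(x,t)$ ranges over $\mathcal{P}_{\leq}$, since the stationary points $k_1,k_2$ and the contour rays $l_j$ move with $t$, and one has to check that the rays still lie in the correct sectors of decay of $\mathrm{Re}(2i\theta)$ all the way up to $\partial D_\varepsilon(-\frac{\alpha}{6\beta})$. Once the local geometry is verified and the analytic approximation of Proposition 1 is invoked to kill the non-analytic part $r_r$ at the cost of $O(t^{-3/2})$, the Painlev\'e $\mathrm{II}$ model handles the rest.
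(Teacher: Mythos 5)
Your proposal is correct and follows essentially the same route as the paper, which gives no details and simply defers to Lemma 4.2 of Charlier--Lenells: localize, compare the jump with the Painlev\'e $\mathrm{II}$ model via the scaling $\hat k=(3\beta t)^{1/3}(k+\frac{\alpha}{6\beta})$ (splitting the error into the $r_a\mapsto r(-\frac{\alpha}{6\beta})$ replacement and the phase truncation, each balanced against the exponential decay along the rays), and read off the contour integral from the $1/\hat k$ coefficient of the model solution. One small correction: the bound $\|(m^{r})^{-1}-I\|_{L^{\infty}(\partial D_{\varepsilon}(-\frac{\alpha}{6\beta}))}=O(t^{-\frac{1}{3}})$ is a property of the model solution alone, coming from $\hat m=I+O(\hat k^{-1})$ with $|\hat k|=(3\beta t)^{1/3}\varepsilon$ on the circle rather than from a small-norm comparison with $m^{(1)}$ --- but since you invoke exactly this large-$\hat k$ expansion in your final step, nothing essential is missing.
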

\begin{proof}
The proof is similiar to the proof of  Lemma 4.2 in \cite{Charlier}.
\end{proof}
\noindent  Assume that the boundary of $D_{\varepsilon}(-\frac{\alpha}{6\beta})$ is oriented counterclockwise. Define the approximate solution $m^{app}$ by
\begin{equation}\label{316}
m^{app}(x,t,k)= \begin{cases}m^{r}(x,t,k), & k\in D_{\varepsilon}(-\frac{\alpha}{6\beta}), \\  I, & \text { elsewhere. }\end{cases}
\end{equation}
Then the error function $E_r(x,t,k)$ defined by
\begin{equation}\label{Er}
E_r(x,t,k)=m^{(1)}\left(m^{a p p}\right)^{-1}
\end{equation}
satisfies a small-norm RH problem with the jump relation ${\left(E_r\right)}_{+}={\left(E_r\right)}_{-}J_r$ across $\breve{\Gamma}=\Gamma\cup\partial D_{\varepsilon}(-\frac{\alpha}{6\beta})$, where the jump matrix $J_r$ is given by
\begin{equation}\label{jmpr}
J_r= \begin{cases}m_{-}^{(a p p)} J^{(1)}\left(m_{+}^{(a p p)}\right)^{-1}, \quad &k \in \Gamma\cap D_{\varepsilon}(-\frac{\alpha}{6\beta}),\\
m_{-}^{(a p p)}\left(m_{+}^{(a p p)}\right)^{-1}, \quad &k \in\partial
D_{\varepsilon}(-\frac{\alpha}{6\beta}), \\
J^{(1)}, \quad &k \in \Gamma \backslash \overline{D_{\varepsilon}(-\frac{\alpha}{6\beta})}.
\end{cases}
\end{equation}
We now denote $\breve{\Gamma}$ as $ \breve{\Gamma}=\breve{\Gamma}_{1} \cup \breve{\Gamma}_{2} \cup \breve{\Gamma}_{3} \cup \breve{\Gamma}_{4}$, where $\breve{\Gamma}_{1}=\breve{\Gamma} \backslash (\mathbb{R} \cup \overline{D_{\varepsilon} (-\frac{\alpha}{6\beta})})$,  $\breve{\Gamma}_{2}=\mathbb{R} \backslash\left[k_{1}, k_{2}\right]$, $\breve{\Gamma}_{3}=\partial D_{\varepsilon}(-\frac{\alpha}{6\beta})$,   $\breve{\Gamma}_{4}=\Gamma^{\varepsilon}.$
\begin{figure}[H]
  \centering
  % Requires \usepackage{graphicx}
  \includegraphics[width=6cm]{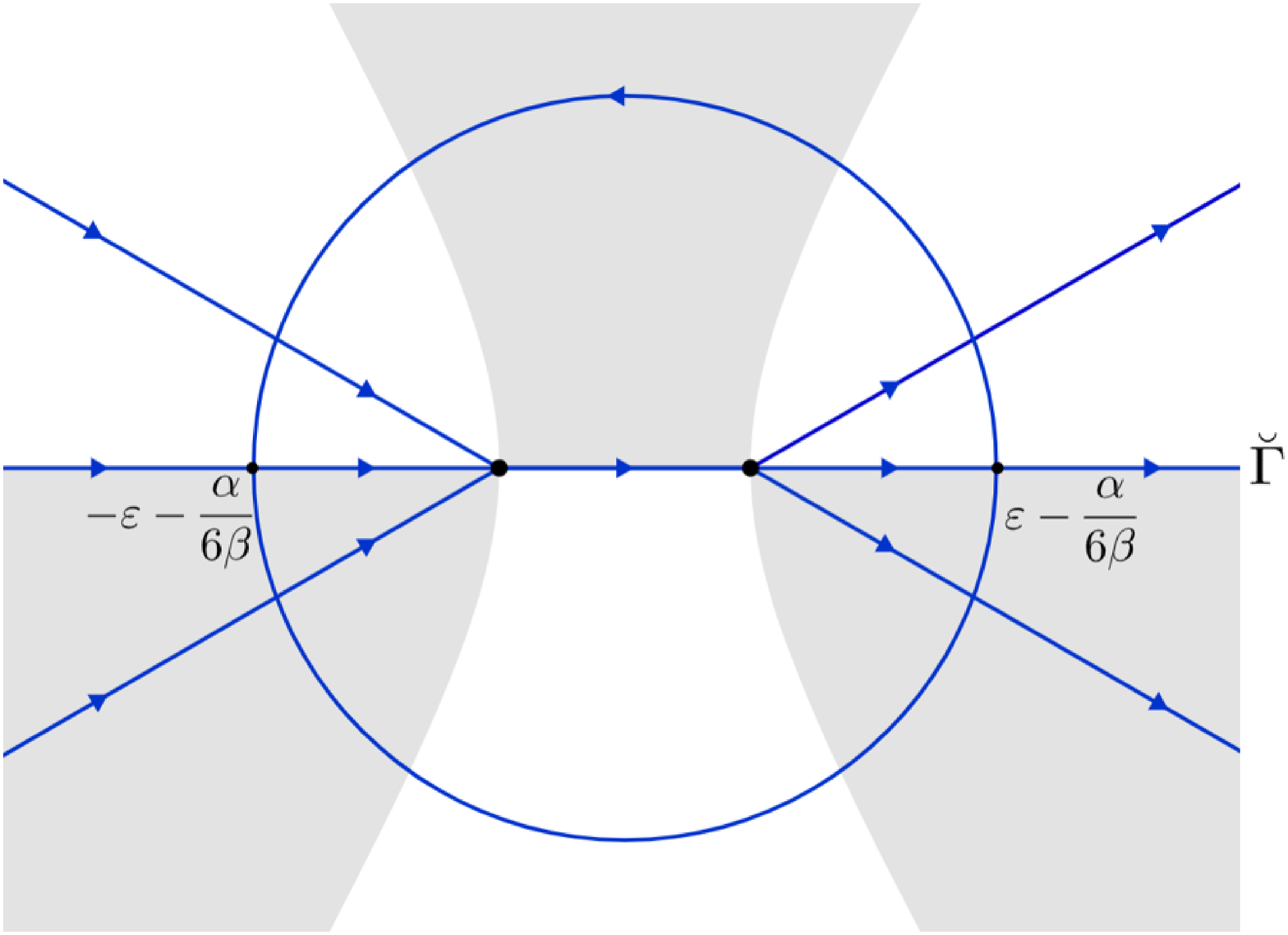}\\
 \caption{ \footnotesize (colour online). The contour $\breve{\Gamma}$ in the case of $\mathcal{P}_{\leq}$.}
\end{figure}

\begin{proposition}Let $w_r=J^r-I .$ For $(x, t) \in \mathcal{P}_{\leq}$, the following estimates hold:
\begin{equation}
\|w_r\|_{L^{1} \cap L^{2} \cap L^{\infty}\left(\breve{\Gamma}_{1}\right)} \leq C e^{-c t},
\end{equation}
\begin{equation}
\|w_r\|_{L^{1} \cap L^{2} \cap L^{\infty}\left(\breve{\Gamma}_{2}\right)} \leq C t^{-\frac{3}{2}},
\end{equation}
\begin{equation}\label{321}
\|w_r\|_{L^{1} \cap L^{2} \cap L^{\infty}\left(\check{\Gamma}_{3}\right)} \leq C t^{-\frac{1}{3}},
\end{equation}
\begin{equation}
\|w_r\|_{L^{1} \cap L^{2} \cap L^{\infty}\left(\check{\Gamma}_{4}\right)} \leq C t^{-\frac{1}{3}}.
\end{equation}
\end{proposition}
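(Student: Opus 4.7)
The plan is to verify the four listed bounds piece by piece, using the explicit formula \eqref{jmpr} for $J_r$, the decomposition $r = r_a + r_r$ from Proposition 1, and the bounds on $m^r$ and on $J^{(1)} - J^r$ given by Proposition 2. Since on each piece both $m^{app}$ and its inverse are uniformly bounded (either $m^{app} = I$ outside the disk, or $m^{app} = m^r$, which is bounded by Proposition 2), the estimates reduce to controlling $J^{(1)} - I$ on the parts of $\Gamma$ outside the disk, $m^r - I$ on $\partial D_\varepsilon$, and $J^{(1)} - J^r$ on $\Gamma^\varepsilon$.

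On $\breve{\Gamma}_1 = \bigl(\bigcup_{j=1}^{4} l_j\bigr) \setminus \overline{D_\varepsilon(-\alpha/6\beta)}$ one has $w_r = J^{(1)} - I$, whose non-trivial entries are of the form $\pm r_a(k)e^{2it\theta}$ on $l_1\cup l_2$ or $\mp\overline{r_a(\overline{k})} e^{-2it\theta}$ on $l_3\cup l_4$. The rays $l_j$ are the steepest-descent directions issuing from $k_1,k_2$, so $\mathrm{Re}(2it\theta)$ has the correct sign for decay; to obtain a quantitative estimate that does not degenerate as $k_1,k_2\to -\alpha/(6\beta)$, I would pass to the scaled variable $\hat{k}=(3\beta t)^{1/3}(k+\alpha/(6\beta))$ from \eqref{sk}, in which $t\theta$ becomes a cubic polynomial in $\hat k$ with bounded coefficient $s$, thereby yielding the uniform lower bound $|\mathrm{Re}(2it\theta(k))|\geq c\,t\,|k+\alpha/(6\beta)|^3$ along each $l_j$. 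Combining this with the Proposition 1(ii) estimate $|r_a(k)|\leq C(1+|k|^2)^{-1} e^{\frac{t}{4}|\mathrm{Re}(2i\theta)|}$ gives
\begin{equation*}
|r_a(k)\,e^{2it\theta(k)}| \leq \frac{C}{1+|k|^2}\, e^{-\frac{3t}{4}|\mathrm{Re}(2i\theta(k))|},
\end{equation*}
and since $|k+\alpha/(6\beta)|\geq \varepsilon$ on $\breve{\Gamma}_1$, the exponent is bounded by $-c\varepsilon^3 t$, so $|w_r|\leq C(1+|k|^2)^{-1} e^{-ct}$, giving the required $L^1,L^2,L^\infty$ bounds. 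On $\breve{\Gamma}_2=\mathbb{R}\setminus(k_1,k_2)$, again $m^{app}=I$, so $w_r$ is built from the $r_r$ entries in \eqref{jmp1}; the $O(t^{-3/2})$ estimate follows immediately from Proposition 1(iii), using $\bigl\||r_r|^2\bigr\|_{L^p}\leq \|r_r\|_{L^\infty}\|r_r\|_{L^p}$ to handle the diagonal entry.

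On $\breve{\Gamma}_3=\partial D_\varepsilon(-\alpha/6\beta)$, with counterclockwise orientation one has $m^{app}_- = m^r$ and $m^{app}_+ = I$, so $w_r = m^r - I$. Proposition 2 delivers $\|(m^r)^{-1}-I\|_{L^\infty(\partial D_\varepsilon)} = O(t^{-1/3})$, which together with the uniform boundedness $|m^r|\leq C$ gives $\|m^r - I\|_{L^\infty} = O(t^{-1/3})$; the $L^1$ and $L^2$ estimates then follow from the compactness of $\partial D_\varepsilon$. On $\breve{\Gamma}_4=\Gamma^\varepsilon$, both $m^{app}_\pm$ equal $m^r_\pm$, and the identity $m^r_+=m^r_- J^r$ gives
\begin{equation*}
w_r = m^r_-\, J^{(1)}\,(m^r_+)^{-1} - I = m^r_-\,(J^{(1)} - J^r)\,(m^r_+)^{-1},
\end{equation*}
so the bound $\|J^{(1)}-J^r\|_{L^1\cap L^2\cap L^\infty(\Gamma^\varepsilon)}\leq Ct^{-1/3}$ from Proposition 2 concludes the argument. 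The only genuinely delicate step I anticipate is establishing the uniform cubic lower bound for $|\mathrm{Re}(i\theta)|$ along the rays $l_j$ outside the disk, since in $\mathcal{P}_\leq$ the two stationary points $k_1,k_2$ merge at rate $t^{-1/3}$ — precisely the scenario the rescaling \eqref{sk} is designed to handle.
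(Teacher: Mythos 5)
Your argument is correct and is exactly the standard route (the one used in Charlier--Lenells, which this paper follows); the paper itself states this proposition without supplying any proof, so your write-up in fact fills a gap in the text. The only point worth tightening is on $\breve{\Gamma}_{2}$: the portion of $\mathbb{R}\setminus[k_{1},k_{2}]$ lying inside $D_{\varepsilon}(-\frac{\alpha}{6\beta})$ has $m^{app}=m^{r}\neq I$, but since $m^{r}$ is analytic across that piece the jump is just the conjugation $m^{r}(J^{(1)}-I)(m^{r})^{-1}$, and the uniform boundedness of $m^{r}$ and $(m^{r})^{-1}$ (which you already invoke in your opening paragraph) reduces it to the same $O(t^{-3/2})$ bound on $r_{r}$.
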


\subsection{Asymptotics of  the solution}
In  this subsection,  we will derive the asymptotics formula of the solution for the Hirota equation in Sector $\mathcal{P}_{\leq}$. Firstly,   we set $\breve{C}$ as the Cauchy operator associated with $\breve{\Gamma}$ and let $\breve{C}_{w_r} f := \breve{C}_{-}(f w_r)$.  Therefore,  matrix error function $E_{r}(k)$ can be rewritten as
$$
E_r(x, t, k)=I+\frac{1}{2 \pi \mathrm{i}} {\int}_{\breve{\Gamma}} \frac{(\mu_r w_r)(x, t, \zeta)}{\zeta-k} \mathrm{~d} \zeta,
$$
where the $2 \times 2$ matrix-valued function $\mu_r(x, t, k)$ is defined by $\mu_r=\mathrm{I}+\breve{C}_{w_r} (\mu_r)$. Moreover, using the Neumann series, $\mu_r(x, t, k)$ satisfies
$$
\|\mu_r-I\|_{L^2(\check{\Gamma})}=O(t^{-\frac{1}{3}}),\quad t\to \infty.
$$
It   follows that
\begin{equation}\label{323}
\lim _{k \to +\infty}k(m(x, t, k)-I)=-\frac{1}{2 \pi \mathrm{i}}\int_{\breve{\Gamma}}(\mu_r w_r)(x, t, \zeta) \mathrm{d}\zeta.
\end{equation}
By\eqref{314}, \eqref{316}, \eqref{jmpr} and \eqref{321}, the contribution from $\partial D_{\varepsilon}(-\frac{\alpha}{6\beta})$ to the right-hand side of \eqref{323} is
\begin{equation}
\begin{aligned}
-\frac{1}{2 \pi \mathrm{i}}\int_{\partial D_{\varepsilon}}(\mu_r w_r)(x, t, \zeta) \mathrm{d}\zeta&=-\frac{1}{2 \pi \mathrm{i}} \int_{\partial D_{\varepsilon}(-\frac{\alpha}{6\beta})} w_r \mathrm{~d} \zeta-\frac{1}{2 \pi \mathrm{i}} \int_{\partial D_{\varepsilon}(-\frac{\alpha}{6\beta})}(\mu_r-I) w_r \mathrm{~d} \zeta \\
&=\frac{m_{1}^{r}(s)}{(3\beta t)^{1/3}}+O\left(t^{-\frac{2}{3}}\right) .
\end{aligned}
\end{equation}
The contributions from $\breve{\Gamma}_{1}, \breve{\Gamma}_{2}$ and $\breve{\Gamma}_{4}$ to the right-hand side of \eqref{323} are $O\left(e^{-c t}\right), O\left(t^{-3 / 2}\right)$ and $O\left(t^{-1 / 3}\right)$, respectively. Recalling the reconstructional formula \eqref{recstr}, \eqref{323} and the definition of $\phi$, we immediately obtain the asymptotic formula of $u(x, t)$ as follows
\begin{equation}
  u(x,t)=  -\frac{1}{(3\beta t)^{1/3}} \exp(-2i \theta(-\frac{\alpha}{ 6 \beta}) - i \gamma )  y(s) +  O\left(t^{-\frac{2}{3}}\right),
\end{equation}
where $s=(3\beta)^{-1/3}(\xi-\frac{\alpha^2}{3\beta})t^{2/3}$. So far, we  completed  the  long time asymptotic analysis of the Hirota  equation in the space-time region  $ \left\{  (x,t)\in \mathbb{R}^{2}  \bigg| -M  \leq ( \frac{x}{t} -  \frac{\alpha^2}{3\beta^2} ) t^{2/3}  \leq 0   \right\}.$

\section{Asymptotics in Sector $\mathcal{P}_{\geq}$}
We now consider the asymptotics in sector $\mathcal{P}_{\geq}$. In this sector, the two stationary points $k_1, k_2$ are complex number and approach to $-\frac{\alpha}{6\beta}$ as the speed of $t^{-\frac{1}{3}}$ as $t \to \infty$.
As in Section 3, we first decompose $r$ into two parts. In this part, we  define the contour $\Gamma$ and the open subsets $D, D^*$ as in Figure 5.
\begin{figure}[H]
  \centering
  % Requires \usepackage{graphicx}
  \includegraphics[width=7.5cm]{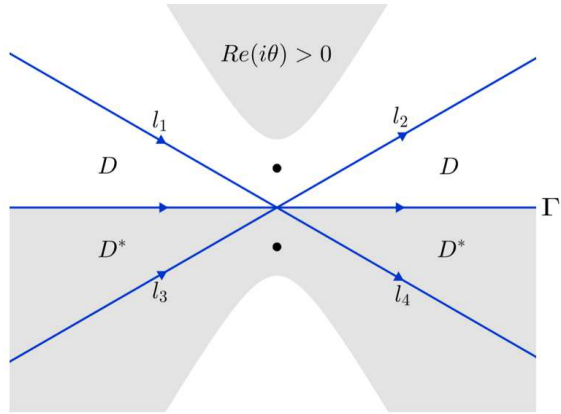}\\
 \caption{ \footnotesize (colour online). The contour $\Gamma$ and the sets $D$ and $D^*$ in the case of $\mathcal{P}_{\geq}$.\\
The region where $\mathrm{Re}(i\theta)>0$ is shaded.}
\end{figure}

\begin{proposition}
There exists a decomposition
\begin{equation}
r(k)=r_a(x,t,k)+r_r(x,t,k),\quad k \in \mathbb{R},
\end{equation}
where $r_a$ and $r_r$ satisfy the following properties:\\
(i)\ For $(x,t)\in \mathcal{P}_{\geq}$, $r_a(x,t,k)$ is defined and continuous for $k\in \overline{D}$ and analytic for $k\in D$.\\
(ii)\ The function $r_a(x,t,k)$ satisfies
\begin{equation}
|r_a(x,t,k)|\leq\frac{C}{1+|k|^2}e^{\frac{t}{4}|\text{Re}(2i\theta(k))|},\quad k \in \overline{D},
\end{equation}
and
\begin{equation}
|r_a(x,t,k)-r(-\frac{\alpha}{6 \beta})|\leq C|k+\frac{\alpha}{6 \beta}|e^{\frac{t}{4}|\text{Re}(2i\theta(k))|},\quad k\in \overline{D}.
\end{equation}
(iii)\ The $L^1$, $L^2$ and $L^{\infty}$ norms of the function $r_r(x,t,\cdot)$ on $\mathbb{R}$ are $O(t^{-\frac{3}{2}})$ as $t\to\infty$ uniformly for $(x,t)\in \mathcal{P}_{\geq}$.
\end{proposition}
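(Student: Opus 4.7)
The plan is to follow the strategy of Lemma 4.8 in \cite{Lns2017} (already invoked for Proposition 1), adapted to the geometry of $\mathcal{P}_{\geq}$. The essential simplification here, compared to $\mathcal{P}_{\leq}$, is that the stationary points $k_1,k_2$ have left the real axis as complex conjugates, so the decomposition is carried out on all of $\mathbb{R}$ with Taylor data read off at the single coalescence point $k_0 := -\tfrac{\alpha}{6\beta}$, rather than at the two real endpoints $k_1, k_2$.

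Since $u_0 \in \mathcal{S}(\mathbb{R})$, standard inverse scattering regularity gives $r \in \mathcal{S}(\mathbb{R})$. Fixing an integer $N$ large enough for the estimates below, I would write $r(k) = R_N(k) + \tilde{r}(k)$, where
$$
R_N(k) := \frac{1}{(1+(k-k_0)^2)^{\lceil N/2 \rceil}} \sum_{j=0}^{N} \frac{r^{(j)}(k_0)}{j!}(k-k_0)^j
$$
is a rational truncation that matches $r$ to order $N$ at $k_0$, is pole-free on $\overline{D}$, and decays at infinity, while $\tilde{r} := r - R_N \in H^n(\mathbb{R})$ for every $n$ and vanishes to order $N+1$ at $k_0$.

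The core step is a Fourier split of $\tilde{r}$. Writing $\tilde{r}(k) = \int_{\mathbb{R}} \hat{\tilde{r}}(\xi) e^{ik\xi}\, d\xi$, I would choose a cutoff $\xi_0 = \xi_0(x,t)$ with $|\xi_0| \sim t$ and whose sign is matched to the orientation of $D$ (so that $e^{ik\xi}$ continues analytically into $D$), and set
$$
\tilde{r}_a(x,t,k) := \int_{\xi_0}^{\infty} \hat{\tilde{r}}(\xi) e^{ik\xi}\, d\xi, \qquad r_r(x,t,k) := \int_{-\infty}^{\xi_0} \hat{\tilde{r}}(\xi) e^{ik\xi}\, d\xi,
$$
and let $r_a := R_N + \tilde{r}_a$; any non-vanishing of $\tilde{r}_a(k_0)$ can be absorbed into an additional rational summand so that $r_a(k_0) = r(k_0)$ exactly. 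Analyticity of $r_a$ in $D$ (property (i)) is then immediate. For (ii), $R_N$ contributes an obvious $O(|k|^{-2})$ bound, and $\tilde{r}_a$ is controlled by integrating by parts in $\xi$: the Schwartz decay of derivatives of $\hat{\tilde{r}}$ produces the factor $C/(1+|k|^2)$, and the choice $|\xi_0| \sim t$ together with the steepest-descent geometry of the rays bounding $D$ absorbs the exponential growth into the factor $e^{\tfrac{t}{4}|\mathrm{Re}(2i\theta(k))|}$. The Lipschitz-type bound at $k_0$ follows because $\tilde{r}$ vanishes to order $N+1$ there and $R_N$ is smooth. For (iii), integration by parts in the $r_r$ integral, combined with $|\xi_0| \asymp t$ and the Schwartz decay of $\hat{\tilde{r}}$, yields the $O(t^{-3/2})$ bound uniformly in $L^1$, $L^2$, and $L^\infty(\mathbb{R})$.

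The main obstacle is the uniformity of these estimates over $(x,t) \in \mathcal{P}_{\geq}$. Because $|k_{1,2} - k_0| \sim t^{-1/3}$, the contour $\Gamma$ and the sector $D$ in Figure 5 must be positioned so that their rays remain genuine steepest-descent directions for $i\theta$ at $k_0$ while staying uniformly separated from $k_1, k_2$; one must then verify that $\mathrm{Re}(ik\xi) \le \tfrac{t}{4}|\mathrm{Re}(2i\theta(k))|$ holds for $k \in \overline{D}$ and $\xi \in [\xi_0, \infty)$ with constants independent of the rescaled parameter $s$. Once this sectorial compatibility is secured, the three norm estimates follow by routine adaptation of the $\mathcal{P}_{\leq}$ argument from \cite{Lns2017}.
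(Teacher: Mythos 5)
Your architecture (rational truncation matching Taylor data at $k_0=-\tfrac{\alpha}{6\beta}$, plus a Fourier splitting of the smooth remainder) is the right one, and it is essentially what the paper implicitly relies on: the paper gives no proof of this proposition and, as for its $\mathcal{P}_{\leq}$ analogue, defers to Lemma 4.8 of \cite{Lns2017}. However, there is a genuine gap in your execution: the Fourier splitting must be performed in the variable conjugate to the \emph{phase} $2\theta(k)$, not in the variable conjugate to $k$. With your plain transform, $\tilde r_a(k)=\int_{\xi_0}^{\infty}\hat{\tilde r}(\xi)e^{ik\xi}\,d\xi$ is controlled on $\overline D$ only by $e^{|\xi_0|\,\mathrm{Im}\,k}$, and the compatibility inequality you yourself flag, $|\xi_0|\,\mathrm{Im}\,k\leq \tfrac{t}{4}\lvert\mathrm{Re}(2i\theta(k))\rvert$, provably fails near $k_0$: at the transition $\theta'(k_0)\approx 0$ and $\theta''(k_0)=0$, so on the rays bounding $D$ one has $\lvert\mathrm{Re}(2i\theta)\rvert\sim 8\beta\,l^{3}$ while $\mathrm{Im}\,k\sim l/2$, where $l=|k-k_0|$; with $|\xi_0|\sim t$ this produces an uncontrolled factor $e^{c\,t\,l}$ (of size $e^{ct^{2/3}}$ on the relevant scale $l\sim t^{-1/3}$), which cannot be absorbed into $e^{\frac{t}{4}\lvert\mathrm{Re}(2i\theta)\rvert}$. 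Nor can the cutoff be retuned: taking $|\xi_0|$ small enough to tame this growth leaves $r_r=\int_{-\infty}^{\xi_0}\hat{\tilde r}$ containing an $O(1)$ portion of $\hat{\tilde r}$, destroying the $O(t^{-3/2})$ bound in (iii). (As a separate slip, with your stated sign $\xi_0\sim +t$ the tail $\int_{-\infty}^{\xi_0}\hat{\tilde r}$ is essentially all of $\tilde r$ and is certainly not $o(1)$.)

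The fix, which is the actual content of Lemma 4.8 of \cite{Lns2017} and of the corresponding lemma in \cite{Charlier}, is to parametrize the remainder by $\phi=2\theta(k)$ along $\mathbb{R}$ and split its Fourier transform in the conjugate variable at $\pm t/4$: then $|e^{is\phi(k)}|=e^{s\,\mathrm{Re}(2i\theta(k))}\leq e^{\frac{t}{4}\lvert\mathrm{Re}(2i\theta(k))\rvert}$ for $s\geq -t/4$ and $k\in\overline D$ (where $\mathrm{Re}(2i\theta)\leq 0$), which yields exactly the stated bound in (ii), while the discarded tail gives (iii). This is also where your order-$N$ vanishing of $\tilde r$ at $k_0$ earns its keep: the map $k\mapsto 2\theta(k)$ is cubically degenerate at $k_0$, and the high-order vanishing is precisely what guarantees that $\tilde r$, viewed as a function of $\phi$, retains enough Sobolev regularity for the Fourier tail estimates to close uniformly over $\mathcal{P}_{\geq}$.
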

\noindent Using this decomposition of $r(x,t,k)$, we define $m^{(1)}$ as \eqref{chg1}, then the jump matrix $J^{(1)}$ in \eqref{jmp1} changes into
\begin{equation}
J^{(1)}(x,t,k)=\left\{\begin{aligned}&\left(\begin{array}{cc}
1&0\\r_a(k)e^{2it\theta}&1\end{array}\right), &&k\in l_1\cup l_2,\\
&\left(\begin{array}{cc}
1&-\overline{r_a(\overline{k})}e^{-2it\theta}\\0&1\end{array}\right), &&k\in l_3\cup l_4,\\
&\left(\begin{array}{cc}1-|r_r(k)|^2&-\overline{r_r(k)}e^{-2it\theta(k) }\\r_r(k)e^{2it\theta(k)}&1\end{array}\right), &&k \in \mathrm{R}.
\end{aligned}\right.
\end{equation}
For $\mid \xi-\frac{\alpha^2}{3\beta}\mid t^{\frac{2}{3}}\geq C,$ as in Section 3, the phase function can also be approximated as $t\theta(k)=t\theta(-\frac{\alpha}{6\beta})+s\hat{k}+
\frac{4}{3}\hat{k}^3,$ and $s, \hat{k}$ are given by \eqref{sk}.
Let $\Sigma^{\varepsilon}=(\Gamma \cap D_{\varepsilon}(-\frac{\alpha}{b\beta})) \backslash\mathbb{R}$. Define
\begin{equation}
m^{(2)}(s, t, \hat{k})=m^{(1)}(x, t, k) e^{-i t\theta\left(-\frac{\alpha}{6 \beta}\right) \sigma_{3}},\quad k\in D_\varepsilon(-\frac{\alpha}{6\beta})\setminus\Gamma.
\end{equation}
We write $\Sigma^{\varepsilon}=\Sigma_1^{\varepsilon}\cup\Sigma_2^{\varepsilon}$, where $\Sigma_j^{\varepsilon}$ denotes the part of $\Sigma^{\varepsilon}$ that maps into $j$, see Figure 6.
Then the jump matrix $J^{(2)}$ can be approximated as
\begin{equation}
J^{(2)}(s, t, \hat{k})=\left\{\begin{aligned}
&\left(\begin{array}{cc}
1&0\\r\left(-\frac{\alpha}{6 \beta}\right) e^{2 i\left(s \hat{k}+\frac{4}{3} \hat{k}\right)} & 1\end{array} \right), &&k \in \Sigma^{\varepsilon}_1,\\
&\left(\begin{array}{cc}
1 & -\overline{r\left(-\frac{\alpha}{6 \beta}\right)}e^{-2 i\left(s \hat{k}+\frac{4}{3} \hat{k}^{3}\right)}\\
0 & 1\end{array}\right), &&k \in \Sigma^{\varepsilon}_{2}.  \end{aligned}\right.
\end{equation}
Thus we expect that $m^{(1)}(x,t,k)$ in $D_\varepsilon(-\frac{\alpha}{6\beta})$ approaches the solution $m^r(x,t,k)$ defined by $m^{r}(x,t,k)=\mathrm{e}^{-\mathrm{i} t \theta\left(-\frac{\alpha}{6\beta}\right) \hat{\sigma}_{3}} \hat{m}(\rho, s, \hat{k})$ as $t \to \infty$, where $\hat{m}(\rho, s, \hat{k})$ is the solution of the model RH problem.
\begin{figure}[H]
  \centering
  % Requires \usepackage{graphicx}
  \includegraphics[width=6cm]{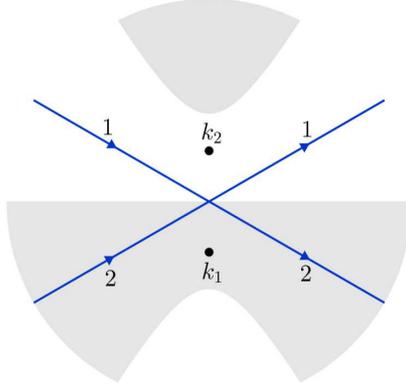}\\
 \caption{ \footnotesize (colour online). The contour $\Sigma^{\varepsilon}=\Sigma_1^{\varepsilon}\cup\Sigma_2^{\varepsilon}$.}
\end{figure}
\begin{proposition}For each $(x,t) \in \mathcal{P}_{\geq}$, $m^{r}(x, t, k)$ is an analytic function of $k \in D_{\varepsilon}(-\frac{\alpha}{6\beta}) \backslash \Gamma^{\varepsilon}$ such that $\left|m^{r}(x, t, k)\right| \leq C$. Across $\Gamma^{\varepsilon}, m^{r}(x, t, k)$ has the jump condition $m_{+}^{r}=m_{-}^{r} J^{r}$, where the jump matrix $J^r$ satisfies
\begin{equation}
\|J^{(1)}-J^{r}\|_{L^{1} \cap L^{2} \cap L^{\infty}\left(\Gamma^{\varepsilon}\right)} \leq C t^{-\frac{1}{3}}.
\end{equation}
Furthermore, as $t \to \infty$,
$$
\left\|\left(m^{r}\right)^{-1}(x, t, k)-\mathrm{I}\right\|_{L^{\infty}\left(\partial D_{\varepsilon}(-\frac{\alpha}{6 \beta})\right)}=O\left(t^{-\frac{1}{3}}\right),
$$
and
\begin{equation}
\frac{1}{2 \pi i} \int_{\partial D_{\varepsilon}(-\frac{\alpha}{6 \beta})}\left(\left(m^{r}\right)^{-1}(x, t, k)-\mathrm{I}\right) d k=-\frac{m_{1}^{r}(s)}{(3 \beta t)^{1 / 3}}+O\left(t^{-\frac{2}{3}}\right),
\end{equation}
where $m_{1}^{r}(s)$ is defined by \eqref{m1}.
\end{proposition}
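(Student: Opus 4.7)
The plan is to mirror the strategy used for Proposition 3 in sector $\mathcal{P}_{\leq}$ (itself modelled on Lemma 4.2 of \cite{Charlier}), with the modifications required by the fact that in $\mathcal{P}_{\geq}$ the stationary points $k_1,k_2$ are complex conjugates and the local contour $\Sigma^{\varepsilon}$ has only two branches. Concretely, define
\begin{equation*}
m^{r}(x,t,k)=e^{-it\theta(-\frac{\alpha}{6\beta})\hat{\sigma}_{3}}\hat{m}(s,\hat{k}),
\end{equation*}
where $\hat{m}$ is the Painlev\'e~II model solution from Appendix~A.2 of \cite{Charlier} with parameter $|r(-\frac{\alpha}{6\beta})|<1$, and $(s,\hat{k})$ are the rescaled variables of \eqref{sk}. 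Analyticity of $m^{r}$ on $D_{\varepsilon}(-\frac{\alpha}{6\beta})\setminus\Gamma^{\varepsilon}$ is inherited from $\hat{m}$ since $\hat{k}\mapsto k$ is affine, and the uniform bound $|m^{r}|\leq C$ follows from the boundedness of $\hat{m}$ away from its jumps together with the fact that the conjugating exponential factor has modulus one (as $t\theta(-\frac{\alpha}{6\beta})\in\mathbb{R}$).

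Next, for the jump comparison on $\Gamma^{\varepsilon}$, I split the error $J^{(1)}-J^{r}$ according to the two sources of mismatch: (a) the difference between $r_{a}(k)$ and $r(-\frac{\alpha}{6\beta})$, which by Proposition~4(ii) is bounded by $C|k+\frac{\alpha}{6\beta}|=C(3\beta t)^{-1/3}|\hat{k}|$ times an exponential factor, and (b) the difference between the exact phase $t\theta(k)$ and the truncated phase $t\theta(-\frac{\alpha}{6\beta})+s\hat{k}+\tfrac{4}{3}\hat{k}^{3}$, which by Taylor expansion gives an error of order $t\,|k+\frac{\alpha}{6\beta}|^{4}=t^{-1/3}|\hat{k}|^{4}$. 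On $\Gamma^{\varepsilon}$ the ratio $|\hat{k}|$ is balanced against the decay of $e^{\pm 2it\theta}$ coming from the sign chart (Figure 1, right), so a standard estimate of the form $|\hat{k}|^{N}e^{-c|\hat{k}|^{3}}\leq C$ absorbs the polynomial growth. Combining (a) and (b) and integrating over the finite-length contour $\Gamma^{\varepsilon}$ delivers the bound $\|J^{(1)}-J^{r}\|_{L^{1}\cap L^{2}\cap L^{\infty}(\Gamma^{\varepsilon})}\leq Ct^{-1/3}$.

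For the boundary $\partial D_{\varepsilon}(-\frac{\alpha}{6\beta})$, on this circle the rescaled variable satisfies $|\hat{k}|=\varepsilon(3\beta t)^{1/3}\to\infty$, so the large-$\hat{k}$ expansion of the Painlev\'e~II model
\begin{equation*}
\hat{m}(s,\hat{k})=I+\frac{\hat{m}_{1}(s)}{\hat{k}}+O(\hat{k}^{-2})
\end{equation*}
can be applied, where $\hat{m}_{1}(s)$ is the known residue matrix in terms of the Hastings--McLeod-type transcendent $y(s)$. Undoing the rescaling and conjugating by $e^{-it\theta(-\frac{\alpha}{6\beta})\hat{\sigma}_{3}}$ converts $\hat{m}_{1}(s)/\hat{k}$ into $m_{1}^{r}(s)/(3\beta t)^{1/3}(k+\frac{\alpha}{6\beta})$ with $m_{1}^{r}(s)$ as in \eqref{m1} (the argument $\gamma=\arg r(-\frac{\alpha}{6\beta})$ enters through this conjugation). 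This yields the $L^{\infty}$ estimate $\|(m^{r})^{-1}-I\|=O(t^{-1/3})$ on $\partial D_{\varepsilon}$. Computing the contour integral $\tfrac{1}{2\pi i}\oint((m^{r})^{-1}-I)\,dk$ by residues then extracts only the $1/\hat{k}$ coefficient, giving $-m_{1}^{r}(s)/(3\beta t)^{1/3}$, with the $O(\hat{k}^{-2})$ tail contributing $O(t^{-2/3})$.

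The main obstacle compared to the $\mathcal{P}_{\leq}$ case is that in $\mathcal{P}_{\geq}$ the real axis no longer contains an interval $(k_{1},k_{2})$ of the real line on which $J^{(1)}$ retains the full non-triangular form; instead the stationary points have left the real line into the complex plane. One must therefore re-verify that the local contour $\Sigma^{\varepsilon}$, whose branches replace the three-piece $\Gamma^{\varepsilon}$ of Section~3, still lies entirely in the region where the signs of $\operatorname{Re}(i\theta)$ guarantee exponential decay of the off-diagonal factors $r_{a}e^{2it\theta}$ and $\overline{r_{a}}e^{-2it\theta}$ (cf.\ Figure 5). Once this sign check is in place, the $L^{1}\cap L^{2}\cap L^{\infty}$ estimates go through exactly as in Section~3, and the residue extraction at infinity in the $\hat{k}$-plane is identical.
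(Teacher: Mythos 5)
Your proposal is correct and follows essentially the same route as the paper, which defines $m^{r}$ via the Painlev\'e~II model solution of Charlier--Lenells and simply refers to their Lemma~4.2 for all of the estimates you spell out. One small simplification you could make: since $\theta$ is a cubic polynomial and $\theta''(-\frac{\alpha}{6\beta})=0$, the identity $t\theta(k)=t\theta(-\frac{\alpha}{6\beta})+s\hat{k}+\frac{4}{3}\hat{k}^{3}$ is exact, so your error source (b) is identically zero and the whole $t^{-1/3}$ discrepancy in the jump comparison comes from replacing $r_{a}(k)$ by $r(-\frac{\alpha}{6\beta})$.
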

\noindent Define $E_r(x,t,k)$ by \eqref{Er}, then $E_r$ satisfies the RH problem with the jump matrix $J_r$ given by \eqref{jmpr}. Denote $\breve{\Gamma}$ as $
\breve{\Gamma}=\Gamma\cup\partial
D_{\varepsilon}(-\frac{\alpha}{6\beta})=\breve{\Gamma}_{1} \cup \mathbb{R} \cup \partial D_{\varepsilon}(-\frac{\alpha}{6\beta}) \cup \Sigma^{\varepsilon}
$, $ \breve{\Gamma}_{1}=\breve{\Gamma} \backslash (\mathbb{R} \cup D_{\varepsilon} (-\frac{\alpha}{6\beta})).$
\begin{figure}[H]
  \centering
  % Requires \usepackage{graphicx}
  \includegraphics[width=6cm]{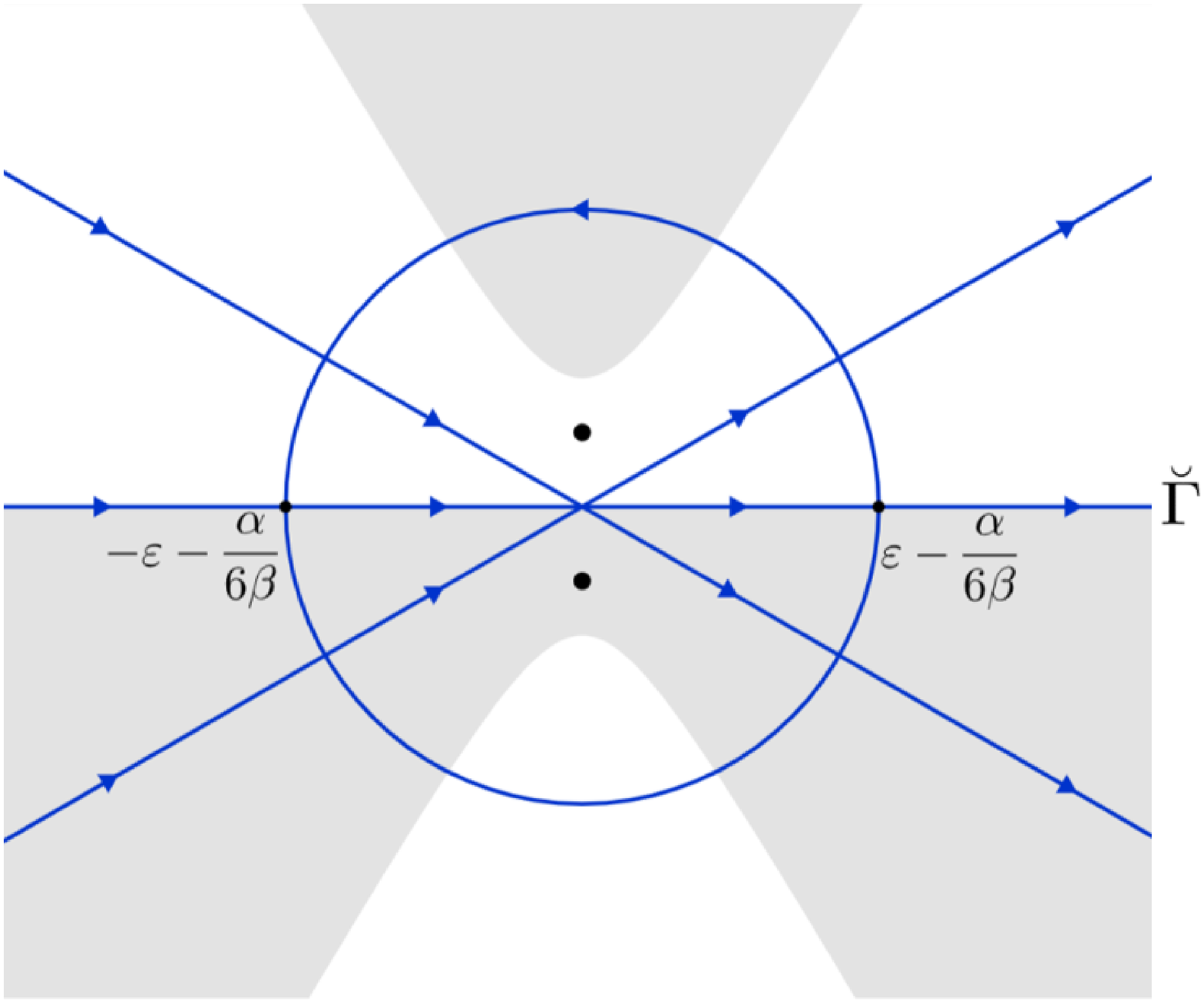}\\
 \caption{ \footnotesize (colour online). The contour $\breve{\Gamma}$ in the case of $\mathcal{P}_{\geq}$.}
\end{figure}

\begin{proposition}Let $w_r=J^r-I .$ For $(x, t) \in \mathcal{P}_{\geq}$, the following estimates hold:
\begin{equation}
\begin{aligned}
&\|w_r\|_{L^{1} \cap L^{2} \cap L^{\infty}(\breve{\Gamma}_{1})} \leq C e^{-c t},\\
&\|w_r\|_{L^{1} \cap L^{2} \cap L^{\infty}\left(\mathbb{R}\right)} \leq C t^{-\frac{3}{2}},\\
&\|w_r\|_{L^{1} \cap L^{2} \cap L^{\infty}(\partial D_{\varepsilon}(-\frac{\alpha}{6\beta}))} \leq C t^{-\frac{1}{3}},\\
&\|w_r\|_{L^{1} \cap L^{2} \cap L^{\infty}(\Sigma^{\varepsilon})} \leq C t^{-\frac{1}{3}}.
\end{aligned}
\end{equation}
\end{proposition}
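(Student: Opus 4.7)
The plan is to establish the four estimates piece by piece, using the explicit form of $J_r$ in \eqref{jmpr} together with the analytic/remainder decomposition of Proposition 5 and the model estimates of Proposition 6. On each component of $\breve{\Gamma}$ I first identify which of the three branches of \eqref{jmpr} applies, then bound $w_r = J_r - I$ in $L^\infty$; because all pieces except $\mathbb{R}$ have bounded total length (the unbounded rays of $\Gamma$ lie in $\breve{\Gamma}_1$ but exponentially small estimates are integrable there), the $L^1$ and $L^2$ bounds follow from the $L^\infty$ ones with the same decay rate, except on $\mathbb{R}$ where I invoke the remainder bounds directly.

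For $\breve{\Gamma}_1$ the jump $J_r = J^{(1)}$ is the matrix with $r_a(k) e^{2it\theta}$ or $\overline{r_a(\bar k)} e^{-2it\theta}$ in the off-diagonal. By construction the contour rays $l_1,\dots,l_4$ (away from $D_\varepsilon(-\tfrac{\alpha}{6\beta})$) lie in regions where $\mathrm{Re}(i\theta(k))$ has the correct sign; combined with the estimate $|r_a| \leq \tfrac{C}{1+|k|^2}e^{\frac{t}{4}|\mathrm{Re}(2i\theta)|}$ from Proposition 5(ii), one gets $|w_r(k)| \leq C\,e^{-c\,t\,|\mathrm{Re}(i\theta(k))|}$. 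Since $|\mathrm{Re}(i\theta(k))|$ is bounded below by a positive constant on $\breve{\Gamma}_1$ and grows like $|k|^3$ along the rays, I obtain $\|w_r\|_{L^1\cap L^2\cap L^\infty(\breve{\Gamma}_1)} \leq Ce^{-ct}$. On $\mathbb{R}$ the jump $J^{(1)}$ involves only the non-analytic remainder $r_r$, and the three norms of $w_r$ are dominated by the corresponding norms of $r_r$ (together with the bounded factor $|r_r|$ in the $(1,1)$ entry), which are $O(t^{-3/2})$ by Proposition 5(iii).

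For the piece near the stationary point I exploit that inside $D_\varepsilon(-\tfrac{\alpha}{6\beta})$ one has $m^{app} = m^r$. On $\partial D_\varepsilon$ (oriented counterclockwise so $m^{app}_+ = m^r$ and $m^{app}_- = I$) we get $J_r = (m^r)^{-1}$, hence $w_r = (m^r)^{-1} - I$, and Proposition 6 gives $\|w_r\|_{L^\infty(\partial D_\varepsilon)} = O(t^{-1/3})$; the $L^1$ and $L^2$ bounds follow because $\partial D_\varepsilon$ has finite length. On $\Sigma^\varepsilon$ both $m^{app}_+$ and $m^{app}_-$ equal the corresponding limits of $m^r$, which satisfies $m^r_+ = m^r_- J^r$, so
\begin{equation}
J_r = m^r_-\, J^{(1)}(J^r)^{-1}(m^r_-)^{-1}, \qquad w_r = m^r_-\,\bigl(J^{(1)} - J^r\bigr)(J^r)^{-1}(m^r_-)^{-1}.
\end{equation}
Using the uniform bound $|m^r|\leq C$ and the boundedness of $(J^r)^{-1}$ on $\Sigma^\varepsilon$, this reduces the estimate to $\|J^{(1)} - J^r\|_{L^1\cap L^2\cap L^\infty(\Gamma^\varepsilon)} \leq Ct^{-1/3}$, which is exactly the content of Proposition 6.

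The only step that needs genuine care is the $\Sigma^\varepsilon$ estimate, where one must compare the rescaled phase $s\hat k + \tfrac{4}{3}\hat k^3$ with the exact phase $t\theta(k) - t\theta(-\tfrac{\alpha}{6\beta})$ and the value $r(k)$ with $r(-\tfrac{\alpha}{6\beta})$, both on a contour of length $O(1)$ in the $k$-variable but $O(t^{1/3})$ in $\hat k$. Using Taylor expansion of $\theta$ at $-\tfrac{\alpha}{6\beta}$ (which contributes an error of order $\hat k^4/t = O(t^{-1/3})$ after rescaling, uniformly on $\Sigma^\varepsilon$) together with the Lipschitz-type bound $|r_a(x,t,k)-r(-\tfrac{\alpha}{6\beta})| \leq C|k+\tfrac{\alpha}{6\beta}|e^{\frac{t}{4}|\mathrm{Re}(2i\theta)|}$ from Proposition 5(ii), the factor $|k+\tfrac{\alpha}{6\beta}| = O(t^{-1/3})$ on $\Sigma^\varepsilon$ produces the claimed $O(t^{-1/3})$ rate. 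This is the main technical point, and it parallels the argument given for the $\mathcal{P}_{\leq}$ case so the computation can be abbreviated by analogy with the proof of Proposition 3.
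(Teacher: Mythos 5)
Your argument is correct and is exactly the standard small-norm estimate the paper relies on (the paper in fact states this proposition without proof, deferring to the analogous computation for $\mathcal{P}_{\leq}$ and to \cite{Charlier}): you correctly read off $J_r$ on each component of $\breve{\Gamma}$ from \eqref{jmpr}, use the sign of $\mathrm{Re}(i\theta)$ together with Proposition 5(ii) on $\breve{\Gamma}_1$, the $O(t^{-3/2})$ bounds on $r_r$ on $\mathbb{R}$, the identity $J_r=(m^r)^{-1}$ on the counterclockwise-oriented $\partial D_{\varepsilon}$, and the conjugation identity $w_r=m^r_-(J^{(1)}-J^r)(J^r)^{-1}(m^r_-)^{-1}$ on $\Sigma^{\varepsilon}$, which reduces the last bound to Proposition 6. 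The only cosmetic point is that on $\mathbb{R}\cap D_{\varepsilon}$ the jump is conjugated by the bounded analytic $m^r$, which does not affect the $O(t^{-3/2})$ rate, as you implicitly use.
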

\noindent The remainder of the proof to the asymptotic formula  proceeds as in sector $\mathcal{P}_{\leq}$.
 
 In  this work, we have investigated  the long time asymptotic solution of the Cauchy   problem for  defocusing Hirota equation with  decaying data in the special  transition region $|\frac{x}{t} -  \frac{\alpha^2}{3\beta}|t^{2/3}  \leq M$, $M$ is a positive constant.  Based on  the Riemann-Hilbert prblem which is established in \cite{Huang},  we perform the nonliear steepest descent method to analysis the asymptotic properties of the solution in the left  and right transition regions, respectively.  What is meaningful is that we find the solution of defocusing Hirota equation can be approximated  in terms of the real-valued solution of Painlev\'{e} $\mathrm{II}$ equation with the error $O (t^{-\frac{2}{3}})$.  What's more,  as for the focusing Hirota equation, we  can  derive  the asymptotic formula of the solution in transition regions  by nonlinear steepest descent method. It's worth noting that the result is similar to that of  defocusing case  except that other errors caused by soliton asymptotics.

\newpage

\noindent\textbf{Acknowledgements}

This work is supported by  the National Natural Science
Foundation of China (Grant No. 11671095,  51879045).

\end{document}